\newtheorem{theorem}{Theorem}[section]
\newtheorem{lemma}[theorem]{Lemma}
\newcommand{\RN}[1]{\textup{\uppercase\expandafter{\romannumeral#1}}}
\begin{document}

\title{{\sc\Large Verifiable Computing}\\ Using Computation Fingerprints Within FHE\\
{\rm\large (Preliminary Version)}}

\author{Shlomi Dolev $~~~$ Arseni Kalma\\
Department of Computer Science, Ben-Gurion University of the Negev, Israel}

\maketitle

\begin{abstract}
We suggest using Fully Homomorphic Encryption (FHE) to be used, not only to keep the privacy of information but also, to verify computations with no additional significant overhead, using only part of the variables length for verification. This method supports the addition of encrypted values as well as multiplication of encrypted values by the addition of their logarithmic representations and is based on a separation between hardware functionalities. The computer/server performs blackbox additions and is based on the separation of server/device/hardware, such as the enclave, that may deal with additions of logarithmic values and exponentiation.

The main idea is to restrict the computer operations and to use part of the variable for computation verification (\textit{computation fingerprints}) and the other for the actual calculation. The verification part holds the FHE value, of which the calculated result is known (either due to computing locally once or from previous verified computations) and will be checked against the returned FHE value. We prove that a server with bit computation granularity can return consistent encrypted wrong results even when the public key is not provided. For the case of computer word granularity the verification and the actual calculation parts are separated, the verification part (the consecutive bits from the LSB to the MSB of the variables) is fixed across all input vectors. We also consider the case of Single Instruction Multiple Data (SIMD) where the computation fingerprints index in the input vectors is fixed across all vectors.

\noindent
{\small {\bf Keywords:} Fully Homomorphic Encryption, Cryptography, SIMD, Verification}
\end{abstract}


\setcounter{page}{1}
\pagenumbering{arabic}
\section{Introduction}
The fundamental problem of
\textit{how can a delegator verify that the delegatees performed the computation correctly, without running the computation itself?} formed in \cite{badrinarayanan2018non} and arose as early as 1992.
Important works, as \cite{kilian1992note} or \cite{micali1994cs}, advanced the field of computation delegation by expanding current theoretical models, such as random oracle or standard assumption models, but did not offer a practical enough solution. Other general methods, such as the ones presented in \cite{dolev2013probabilistic} consider hardware design of arithmetic circuits. This problem was also stated explicitly in \cite{goldwasser2015delegating}, ``Such a mechanism of general secure computation outsourcing was recently shown to be feasible in theory, but to design mechanisms that are practically efficient remains a very challenging problem''. \cite{goldreich1991proofs} Stated that zero-knowledge proofs do exist outside the domain of cryptography and number theory, using no assumptions.
One of the only practical breakthroughs, that resulted in a real working application, is the SNARK \cite{bitansky2017hunting}. Although it is designed to be practical, and having an extremely efficient verification procedure, the complexity of creating the verification itself is poly-logarithmic to the length of the original computation and may nullify the benefit in delegating the computation. In addition to that, it only works on known (cleartext) input while typically delegation of computing is based on encrypted inputs and FHE.

Several other approaches were suggested in \cite{FGP14,GGP10,GW13,CKV10,AIK10,CC+18}, which verify computations with schemes involving the usage of \emph{authentication tags}, \emph{signatures} or different \emph{MACs}, however none suggest to couple the result with the indication in a monolithic fashion. A separate indication on a correct computation does not imply that another instance of computation is also correct. Other approaches are based on check-able traces of the computation such as MPC-in-the-head \cite{ishai2007zero} or the probabilistic check-able proofs (PCP) \cite{AS98},
where several randomly (or blindly) chosen steps in the MPC/PCP can be checked to validate the computation. These approaches either requires space overhead and/or interactions with the servers. In some scenarios a user cannot change the (e.g., the server is in a cloud computing scenario) system when an indication on wrong computation is received. Moreover, the servers may decide to act maliciously in burst, so repetition may not assist in learning the number of repetitions needed, let alone the overhead needed in repeating the computation. Moreover, we suggest a more efficient approach by coupling the computation values with the verification section in the same unit of computation, while mantaining a ``pre-processing'' model \cite{applebaum2010secrecy, chung2010improved, gennaro2010non} which can work in a transitive manner, and in proportion of $\log_2$ to the number of inputs.

Fully Homomorphic Encryption (FHE) \cite{gentry2009fully, rivest1978data} enables computation of arbitrary functions on encrypted data without knowing the secret key. Many of the FHE schemes (e.g., \cite{bos2013improved, brakerski2012fully, brakerski2014leveled, brakerski2014efficient, brakerski2011fully, cheon2015fully, coron2014scale, van2010fully, doroz2016homomorphic, ducas2015fhew, gentry2012homomorphic, gentry2013homomorphic, lopez2012fly, cheon2017homomorphic}) followed Gentry's suggested blueprint \cite{gentry2009fully}. FHE becomes more and more practical, and in many scenarios include the evaluation of various calculation or algorithms on encrypted and sensitive data \cite{cheon2015homomorphic, kim2017secure, lauter2014private, naehrig2011can, wang2016healer}. FHE was implemented by several open-source libraries, where the libraries we focus on are Microsoft's SEAL \cite{sealcrypto} and IBM HELib \cite{halevi2014algorithms}, each implementing different schemes from the ones stated above, where
we are interested in BGV \cite{brakerski2014leveled} and CKKS \cite{cheon2017homomorphic} schemes. Unfortunately, while FHE copes with honest but curious servers, it is not designed to cope with malicious/Byzantine \cite{lamport2019byzantine} servers. Note that in the cloud the identity of the server that executes the delegated computing task is not known, and therefore there can be no binding of a server to (wrong) results, and later re-execution may yield the same no binding results, thus, malicious servers can try their luck forever. We propose to add computation fingerprints, that are analogous to error detection codes and can detect deviation from the requested computation. Those computation fingerprints will be computed exactly once by the delegator, by executing the requested computation on some random base values, where those values may act as a witness for future calculations of the same circuit/program/procedure. The result of this calculation will be used, as the same base values will be coupled to all future other computation requests, and the result will be accepted if and only if the fingerprint result was received from the server as previously calculated.
Along with that, we give real implementation examples of usable functions and survey possible attack models on our proposed schemes.

We start with the impossibility result, showing that an adversary can return arbitrary predefined results encrypted with the unknown key used to encrypt the FHE inputs, blindly and consistently, using various techniques. Later on, we consider restricted computation capabilities, where blackbox procedures are used while keeping the computation capabilities to compute any arithmetic circuit over the inputs. In Section \ref{Fingerprints in Addition in Word Computation Granularity}, we present a scheme for adding \textit{computation fingerprints} inside a FHE number,  represented as encrypted bits, allowing only additions. To verify the computations we use unary or integers, along with various restrictions applied to the blackbox procedures to make our scheme safe. Following that, in Section \ref{Multiplication via Logarithms in Word Representation} we present an additional technique, using logarithmic representations, supporting also multiplication, thus, supporting the computation of complex computation as any polynomial and therefore any arithmetic circuit. Here we also consider that the input is represented as an encrypted value (word), and not as encrypted bits like in the previous section. In Section \ref{Integration and Implementation in Word Granularity} we survey an implementation example using existing FHE libraries, different number representation implications, and managing the work of values under different fields. Lastly, in Section \ref{Restrictive SIMD as a Vector with Fingerprints} we survey operations, supporting computation fingerprints that indicate whether the required program/computation is performed on the inputs. Where unlike in the previous section the inputs can be vectors used with SIMD operations, the inputs are not restricted to integers (represented using a computer word), and can be general values including floating point numbers. In Section \ref{Conclusion Remarks} we complete with our conclusions.
\section{Consistent Wrong FHE Results in Bit Granularity}
\label{Consistent wrong FHE results in bit granularity}
We present an impossibility result for detecting wrong computation by repeating the computing possibly using different (FHE) keys and comparing the decrypted results when the computer is allowed to perform bitwise operations. We start with a small scale introduction (\textit{blind conditioning}) and later present a scheme for the server to produce consistent wrong results to the same function. One important ingredient in our demonstration is the execution of blind conditioning.
\\
\\
\noindent{\bf Blind conditioning.}
\label{methods:blind}
One may assume that encryption prevents some possible conditioned computation, we demonstrate the opposite. We present an important feature possible on homomorphically encrypted data, which is the ability to perform {\em blind if} on it. This does not allow us to read the input as plaintext, rather, it lets us run calculations and conditions blindly on (encrypted) data. Namely, we demonstrate blind execution of a basic condition:  if a specific bit $b$ is true (say, represented as an encrypted 1) output $f(x)$, otherwise, output $g(x)$. We plan to use the bit $b$ and calculate the following $b  \cdot f(x)+(enc(1)-b ) \cdot g(x)$.
The implementation is done as described in Algorithm \ref{alg:Blind}, by having a specific $bit$ as the conditioned value, a function $f(x)$ for the positive case, and another function $g(x)$ for the negative case (extracting an $enc(1)$ value will be demonstrated in the next section).

\begin{algorithm}
\caption{Implementation of blind conditioning by a specific bit}
\label{alg:Blind}
\begin{algorithmic}
\REQUIRE $enc(bit), f(x), g(x)$
\STATE return $(enc(bit) \cdot f(x)+(enc(1)-enc(bit)) \cdot g(x)))$
\end{algorithmic}
\end{algorithm}

The above manipulation is based on bit representation of the encrypted value, namely, a vector representation of the data. This allows access to single bits, which offer greater flexibility in doing blind conditioning. Having access to an encrypted bit lets us blindly ``figure'' the bit value and act accordingly, at the cost of several multiplications and an addition of the argument built earlier. It is important to note again that we are not able to ``see'' the actual clear text bit value, but able to work on it and output a result dependent on it. The \textit{blind conditioning} will be later extended to an even more elaborate technique that is also capable of implementing blind switch/case programming primitives (see Algorithm \ref{alg:LUT}), rather than a single blind if.

At first glance it seems possible to repeat a computation encrypted with different keys, and find out whether the server is computing correctly, however, as we show next, the server can be answering consistently wrong answers. The ability to send consistent output can be trivially demonstrated by a policy of the server in which the server uses the input as the output. Such a policy may obviously be suspected by the computation delegator. The server may employ more sophisticated consistent wrong computations, such as, constantly adding (or constantly subtracting) as long as the operation is different from the requested computation, the input variables (in case the input consists of more than one operation).

The computation delegating party may have means to check the result. For example, checking the value of the least significant bit, to reflect the bit anticipated value. One may also occasionally compute the result to compare with the result the server sends, this still gives non-neglected probability for using hunchbacked wrong results, and in the scope of cloud computing does not necessarily reveal the malicious server \cite{shaer2018b}. Other self-testing self-correcting techniques \cite{dolev2013probabilistic, blum1993self} can be used to verify the result. The malicious server may design and tailor a function that nullifies the benefits of (such) easy (easier than the actual calculation) attempts to check techniques. We next prove that the server can implement {\em any} function on the encrypted inputs and be consistent with the answers, repeating the same outputs to the same corresponding (before encryption) inputs, and also be tailored to be consistent across outputs.
 
We now prove that the server can execute any (wrong) function, and still, possibly return consistent outputs across inputs, coping with self-testing/correcting checks, on any encrypted input, and output an encrypted answer for the chosen function.
Given an encrypted input, the server produces an arbitrary encrypted (wrong) computation result of its choice. This result is encrypted with the same key used to encrypt the input, while the server has no access to the encryption (public) key itself. The bootstrapping procedure \cite{gentry2009fully} requires the public key for refreshing the ciphertext to allow unlimited number of operations, still our result concerning the ability to return consistent wrong output is stronger as we do not supply the public key to the server, thus, also address more limited HE schemes; schemes that are constrained to a certain computation depth.

To ensure that the choice of the wrong result is consistent with future queries with the same input, possibly encrypted by different key(s), the server may (explicitly or implicitly) construct in its memory a (clear-text) lookup table, mapping inputs to (specific wrong) outputs.

Every time an encrypted input is received, the server employs a procedure to return an encrypted output with the unknown key used on the encrypted input, using its predefined lookup table, even without having the public key which was used for the encryption.
This means that the server can save a particular and consistent result as cleartext in its lookup table, resulting in the same selected operation for all future inputs, even if the encryption key is repeatedly changed.

The implication is that the server can, after selecting a defined output for every possible input in its lookup table, return consistently manipulated results on all future inputs.
\\
\\
\noindent{\bf Implementation details.}
\label{Consistent:Implementation}
We are assuming that a malicious server has only a single encrypted input (which can be in fact a concatenation of several inputs), for which it wants to return a certain output, according to the input, described in its clear-text Look-Up-Table (LUT).
\begin{figure*}[h]
\centering
\includegraphics[width=3.5cm]{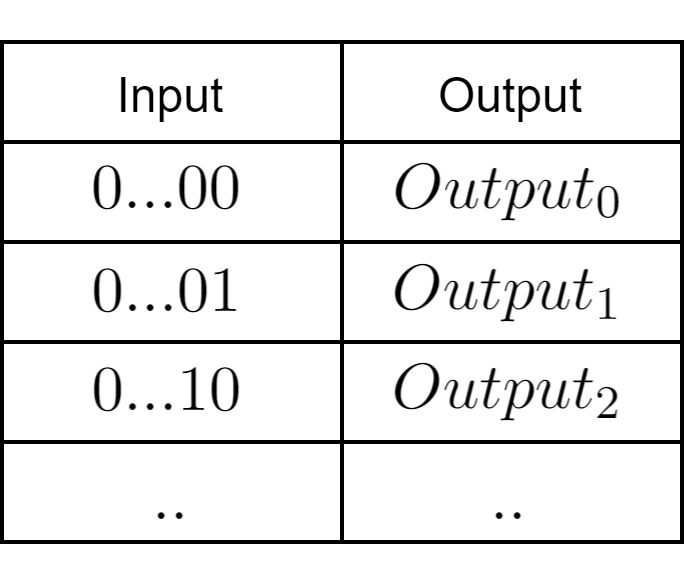}
\caption{Lookup table example}
\label{fig:LUT}
\end{figure*}

For the first step, we show how a malicious server obtains an encrypted value 1 when needed, for the use of setting the selected outputs from its plaintext Look Up Table (LUT) to an encrypted form, without having the public encryption key. Note that, obviously, the avoidance in supplying the public key is limiting the manipulation power of the server, but still allows any computation, as the delegator may supply upfront all needed encrypted values as part of the inputs. Our technique uses solely FHE supported functions, multiplication and addition, to create a bitwise OR operation between two arbitrary encrypted values (bits). The implementation is based on the bitwise OR properties, which can be represented as $\textrm{OR}(x, y) = (x+y) - (x \cdot y)$, meaning that an OR operator will be implemented as a procedure with a single addition, multiplication, and subtraction.
This OR operator is used upon all the bits of the encrypted input iteratively, resulting in an encrypted value 1, unless all of the encrypted input bits are 0. Obtaining an encrypted 0 value is much easier, by simply subtracting any selected input bit by itself. The case in which the input value is exactly 0 is addressed by the fact that when no encrypted 1 can be constructed, the output must be 0.

After having encrypted 1 and 0 values, the malicious server can use regular FHE operations to construct its desired encrypted output values by the received input, even without the public key. We can also negate specific bits in the encrypted input, by subtracting them from our constructed encrypted 1 value, using $Negation(enc(bit))=enc(1)-enc(bit)$.

The next step will be the construction of a calculation that allows us to blindly find the correct row, in terms of the received encrypted input, and return our desired encrypted output. This is made blindly by forming an argument returning a true (1) value that is dependent on the input, for each possible row in the LUT. For every row in the LUT, the argument will consist of multiplication of every bit, or negated ({\it neg}) bit of the input, in a way that the multiplications will equal 1 if and only if the input matches the row input value. This makes the argument equal to 1 if and only if the input has the corresponding bits set to true or false according to the row input value, and to 0 otherwise. By multiplying the expected output at the end of the argument, we set the row value accodingly, if it was the correlated row input value. Thus, we (blindly) get the output of the row if the inputs match the row and 0 otherwise.
\\
\\
For example, for the first 2 rows in the LUT, we construct the following arguments: \\ \\
\smallskip
$neg(bit_n) \cdot \ldots \cdot neg(bit_2) \cdot neg(bit_1) \cdot neg(bit_0) \cdot Output_0 $ \\
$neg(bit_n) \cdot \ldots \cdot neg(bit_2) \cdot  neg(bit_1) \cdot enc(bit_0) \cdot Output_1 $ \\ \\
And the last row will be set as follows:\\ \\
$enc(bit_n) \cdot \ldots \cdot enc(bit_2) \cdot enc(bit_1) \cdot enc(bit_0) \cdot Output_n $ \\ \\
After calculating the value of all the possible rows, we sum the results, where we have zeros for all the rows, except for at most one row, which represents the received encrypted input, and that row calculation will result in our expected output corresponding to the input. It is important to note that we do not break any of the FHE properties, and are not able to read the encrypted input as cleartext, but constructing the blind (and computation heavy) argument, allows us to return a defined output for every possible input, using solely the encrypted input from the delegator.\\
A basic example defined and implemented using HELib of an input range of three bits, where the output for each possible input (0-7) is set alternately to 0 and 1 can be found in \cite{github}.
\begin{algorithm}[H]
\caption{Implementation of bitwise OR operator}
\label{alg:OR}
\begin{algorithmic}
\REQUIRE $x, y$
\STATE return $((x + y) - (x \cdot y))$
\end{algorithmic}
\end{algorithm}

\begin{algorithm}[H]
\caption{Implementation of the Negation operator}
\label{alg:Negation}
\begin{algorithmic}
\REQUIRE $enc(value), enc(1)$
\STATE return $(enc(1)-enc(value))$
\end{algorithmic}
\end{algorithm}

\begin{algorithm}[H]
\caption{Use $LUT$ $L$ for an encrypted input $x$}
\label{alg:LUT}
\begin{algorithmic}
\REQUIRE $enc(x_0), enc(x_1), \ldots , enc(x_n), LUT \ L, \ enc(1)$
\FOR{$i=0 \ to \ n$}
\STATE $negated_i = Negation(enc(x_i), enc(1))$
\ENDFOR
\STATE $out[0] = negated_n \cdot \ldots \cdot negated_2 \cdot negated_1 \cdot negated_0 \cdot output_0$
\STATE $out[1] = negated_n \cdot \ldots \cdot negated_2 \cdot negated_1 \cdot enc(x_0) \cdot output_1$
\\ \ldots
\STATE $out[n] = enc(x_n) \cdot \ldots \cdot enc(x_2) \cdot enc(x_1) \cdot enc(x_0) \cdot output_n$
\FOR{$i=0 \ to \ n$}
\STATE $result = result + out[i]$
\ENDFOR
\STATE return $result$
\end{algorithmic}
\end{algorithm}

The conclusion from all the above is that a malicious server can arbitrarily manipulate the outputs in a consistent way across FHE values encrypted with different keys. Thus, we better restrict the computation primitives the server can use.

Next, we examine several possible restrictions, in particular, restricting the computer to execute only the addition operation. We still investigate the bit computation granularity and examine the binary addition in the process of inputs additions.
\section{Fingerprints in Addition in Word Computation Granularity}
\label{Fingerprints in Addition in Word Computation Granularity}
\noindent
A computation following our scheme will be traceable for verification purposes by adding encrypted input control bits, called {\em computation fingerprints}, inside a single fully homomorphic encrypted number \cite{gentry2009fully}, where the use of subtraction is handled as the addition of negative values. After understanding the proposed solution we survey and test its limits analyzing the probability for successful detection of a malicious server's wrong calculation, where the soundness of our scheme and the {\em computation fingerprint} verification will be presented and surveyed later on.
\\
\\
\noindent
{\bf Arithmetic circuit representation.}
\label{Arithmetic circuits}
We first design our (word) schemes using arithmetic circuits, that represent polynomials (possibly over a certain finite field). Arithmetic circuits are defined as the multiplication or addition of two other variables or constants and provide a formal way to represent the polynomial computing complexity, of which we concentrate in our scheme. In the case of the addition of two polynomials, the resulted polynomial will have the same number of monomials, with component-wise addition, where the multiplication of polynomials will result in a long number of single monomials, added to each other. This representation and specifically the multiplication composition will increase (even exponentially) the size of the polynomial description, yet will be set within our scheme limitation as it will result in a (large) number of separated monomials, calculated by addition of (also, logarithmic represented values for facilitating multiplications) values, and added to each other.
\\
\\
\noindent {\bf Restricting to blackbox additions.}
\label{Restricting to blackbox}
One of the basic operations in computation is an addition, recalling that arithmetic circuits are a combination of additions and multiplications. We examine the server capabilities to manipulate the addition operation when the computation is executed over bit representation. The ability to replace the carry bits computed in each bitwise operation by 0, will lead to the result of 00 when adding 01 to 01. In other words, the server can just ignore or set to zero any carry bit that arose from the addition operation, resulting in a wrong result.

Thus, we further restrict the primitives of the server to execute blacbox word granularity additions, where the server calls a function that returns the correct addition result rather than involving a binary addition.

This means that from now on, we focus on building a computation blackbox supporting environment, similar to the interpreter level (e.g., Java byte-code \cite{venners1998java, gosling2000java}) or restricted operating system as Internet cafe (e.g., \cite{isaacson2009techniques}). The schemes we present can implement any arithmetic circuit computation, while providing trust for the correctness of the computed calculation.
\\
\\
\noindent {\bf Adversarial capabilities of blackbox addition restricted server.}
\label{Adversarial capabilities}
We restrict the operations of the (possibly malicious) server to execute blackbox word granularity additions, while the server is still able to deviate from performing exactly one addition of each of the inputs to be added. This means that the use of blackboxs (analogously to the use of hardware machine opcode primitives) are always assumed, but still an undesired  sequence of their operation can lead to wrong output.
Note that the blackbox restriction prevents the server from manipulating the encrypted inputs otherwise.
\\
\\
\noindent {\bf Binary fingerprints.}
\label{Binary fingerprints}
We suggest computation (addition) fingerprints to ensure the server added the required variables, exactly once. We construct and work with numbers that are composed of parts that fulfill two purposes; the first is to compute the calculation request itself, and the other part(s) for verifying the computation that was made, which we call {\em computation fingerprint}. All numbers will be segmented respectively, to support the correct addition of parts. The expected result of the fingerprint is known and can be computed by the delegator, and can be reused/reconstructed over different inputs, even working with different servers, as it is encrypted and its exact position is unknown to any server. The final result from the server is verified, and deemed valid or not, based on the computed value of the {\em computation fingerprint}.

For example, if we would like to calculate $001+110$, we add four computation fingerprint bits in the tail of $001$ and $110$ to have, say, $0010010$ and $1100100$,
and check whether the result ends with $0110$ prior to accepting the $111$ of the three most significant bits as the correct result. Note, that the choice of having the binary fingerprints to be located as the least significant bits is motivated by the automatic overflow processing of the actual calculation part, avoiding a mix between the fingerprints and the actual calculation portion. This is our motivation to choose the portion of the {\em computation fingerprint} to reside in the least significant part, as the delegator selects the values comprising the input and accordingly knows the expected result of the fingerprint. Note, that the malicious server may use fingerprint overflow to ruin the result while keeping the fingerprint correctness, in particular, the server can add an addend $2^m+1$ times, where $m$ is the number of fingerprint bits. Thus, our addition blackbox is designed to set the result of addition to zero if there is an overflow from the fingerprint part.
\begin{figure*}[h]
\centering
\includegraphics[width=5cm]{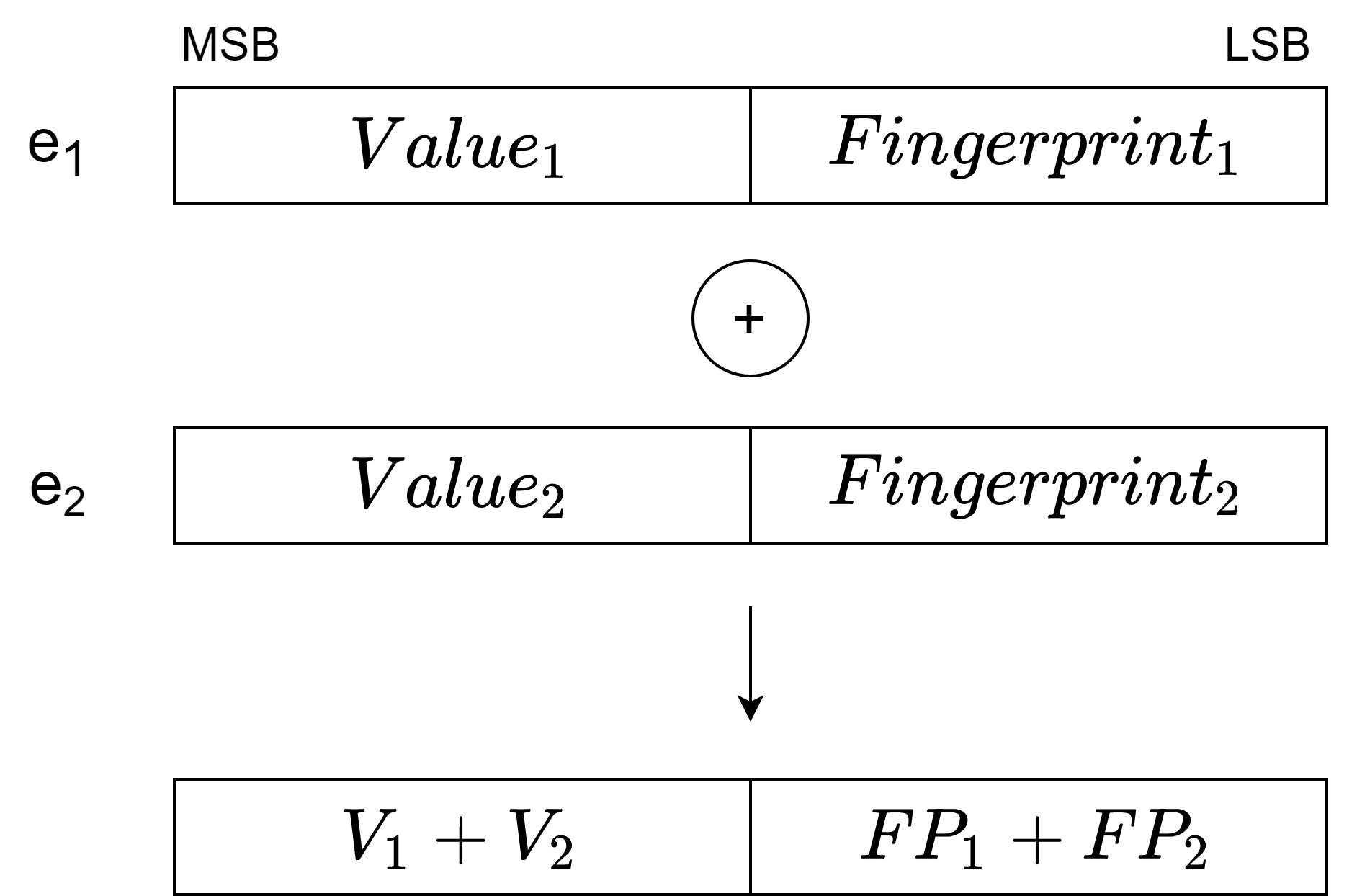}
\caption{Encrypted to encrypted addition}
\label{fig:encrypted to encrypted addition}
\end{figure*}
\\
\\
\noindent {\bf Fingerprint manipulation.}
\label{Fingerprint manipulation}
Obviously, if all inputs are added up at most once and at least one input is added less than once there will be a wrong (missing) fingerprint detected in the result. Alternatively, the server may choose to add an input several times to create the {\em computation fingerprint} in a different manner. For now, we keep each input with a single fingerprint bit in a location different than any other addend for analysis and to prevent overflows from the fingerprint portion through to the calculation portion when the computation is correct. Next, we demonstrate that the adversary can try to guess the locations of the fingerprints and create fake fingerprints by adding an encrypted input more than once, without necessarily triggering an overflow from the fingerprint portion.

Assume that the server knows, or correctly guesses (as the input is encrypted), that there is an encrypted input $x=Enc(0000001)$, then the (malicious) server can produce any fingerprint by adding $x$ to itself, for example, $y=x+x=Enc(0000010)$, and $z=y+y=Enc(0000100)$. Moreover, the server can add $y+z$ to obtain the combination $Enc(0000110)$ of fingerprints.

The (malicious) server may add all addends but one, $x$, and try to create the missing fingerprint of the excluded addend by using the fingerprint of another addend $y$.
The server may succeed in creating the missing fingerprint if the fingerprint of 
$y$ is smaller than the fingerprint of $x$, 
for this, there is $1/2$ probability of guessing correctly, and also guessing the right number of bits $d$ that separate the fingerprint of $x$ from the fingerprint of $y$, for this, the probability of a correct guess is $1/(m-1)$, where $m$ is the number of bits used for the fingerprints. If the guesses are right, which occurs with probability $1/(2m-2)$, then adding $y$ $2^d$ times to the sum that excluded $x$ will mask the absence of $x$.

We also note that, the computation fingerprints are probabilistic in nature, and the probability is related to the number of bits used for the computation versus the number of bits used for the fingerprints. For the sake of convenience we chose these number of bits to be equal, allowing the usage of a past verified computation to serve as fingerprints.
\\
\\
\noindent {\bf Complete fingerprints.}
\label{Complete fingerprints} While defining binary fingerprints and surveying the possibilities to manipulate them, we suggested keeping each input a single fingerprint bit, in a unique location. We strengthen this suggestion and require that the fingerprint bit length correlate to the number of addends used, maintaining the previous suggestion that a single, uniquely located fingerprint bit be used per input, so that the complete addition of inputs will result in a $1..11$ fingerprint. The influence of even a single redundant addition will be a definite overflow of the fingerprint.

Despite the fact that \textit{complete fingerprints} allow the evident indication of excessive additions, as adding all inputs and any of the inputs more than once will create a fingerprint value different than $1..11$; still, a weakness exists as we describe next.
\\
\\
\noindent {\bf Fingerprint overflowing.}
\label{Fingerprint overflowing}
In addition to the adversary possibilities surveyed earlier, we consider a different technique that can allow an adversary to create a calculation result with the expected fingerprint, while still producing a wrong calculation output.

Assuming the fingerprint value resides in the least significant part of the input, the adversary can add an input to itself $2^m$ times to clear out the fingerprint ($m$ is the number of fingerprint bits), while changing the computation value, where an extra addition of the same input will result in the inputs original fingerprint, with a different (corrupted) calculation section. The use of this corrupted input in the requested calculation by the server would create an output with the expected fingerprint value, and an arbitrary wrong calculation result.

Using \textit{complete fingerprints}, an overflow will be immediately identified by the most significant bit in the fingerprint section. This bit can be used with the previously constructed bitwise NOT operator (algorithm \ref{alg:Negation}), calculated on the overflow indicating bit, when at least one of the bits is (encrypted) 1 (when all are zeros, the fingerprint stays zero, which enables detection), multiplied by the given result. If no overflow occurred, this will result in the multiplication of the result by 1, which will not change it. This defines the final blackbox configuration, allowing only addition no overflow-carry operations.
\\
\\
\noindent {\bf Fingerprint carry detection.}
\label{Fingerprint carry detection}
In the \textit{complete fingerprint} scenario, as all inputs have a unique fingerprint bit location, no carry operation in the fingerprint section while adding is required, implying that any carry operation indicates an unintended (malicious) behavior. This behavior might represent an attempt to compensate for a missing input and will influence our blackbox definition, so it nullifies the result when a carry operation, in the relevant section, is detected.

We implement it using the OR and NOT bitwise operators (algorithms \ref{alg:OR}, \ref{alg:Negation}), where the blackbox will OR all the carries that resulted in the fingerprint section, negate this result, and multiply it by the addition result to zero the final output when an undesired carry takes place. In the completing case, where no carries were made, the OR operation will result in a 0, then negated to 1, which will not change or impact the output. For this, the blackbox should be aware of the length $m$ of the fingerprint section, and should be specifically built for that fingerprint length. It is important to note that every input, whether it is the same value or not, is regarded independently and is associated with a unique fingerprint.
\\
\\
\noindent {\bf Capabilities bound of a restricted adversarial.}
\label{Tight bound on restricted adversarial}
The restrictions mentioned result in a tight bound on the adversarial capabilities following our scheme, using the addition no overflow-carry blackbox. We start with the deficient case, where a single missing addend will be obviously detected in the resultant fingerprint. The opposite behavior is the excessive addition of inputs, where a single extra addition will cause a definite overflow in the \textit{complete fingerprints} scheme, which will nullify the whole result. The completing case for all of the above, addressed using the carry detection, will strictly nullify the addition result when any different than intended inputs are added. A scenario where the received input equals 0, is managed by using a \textit{blind if}, and returning the same (0) output. An important note is that the encryption keys are changed in every different request, or use, of the server.

This sets a tight bound on the adversarial capabilities, so we can state the following lemma.

\begin{lemma}
\label{lemma:addition}
For every group of FHE input of bits, having $m$ least significant bits used as complete fingerprints, and the other $n$ bits used for computation, operated on by the addition no overflow-carry blackbox defined earlier, will result in the correct addition of the computation values and a \textit{complete fingerprint}, or the value 0.

\end{lemma}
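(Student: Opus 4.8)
The plan is to proceed by a case analysis on the sequence of blackbox addition operations that the server actually performs, and to show that in each case the blackbox definition forces the output to be either the honest result or $0$. The key observation is that the blackbox is the only primitive available (by the blackbox restriction discussed earlier), so any server behavior is fully described by a finite sequence of word-granularity additions applied to the $n+m$-bit encrypted words, together with the three guard conditions built into the blackbox: nullify on fingerprint-section overflow, nullify on any carry inside the fingerprint section, and the blind-if handling of the all-zero input.

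First I would set up notation: each input word is $x_i$ with an $n$-bit computation part $c_i$ and an $m$-bit complete-fingerprint part $f_i$, where $f_i = 2^{j_i}$ for distinct positions $j_i \in \{0,\dots,m-1\}$, so that the honest sum has fingerprint $\sum_i 2^{j_i}$. With complete fingerprints we additionally have that the number of addends equals $m$ and the $j_i$ exhaust $\{0,\dots,m-1\}$, so the honest fingerprint is exactly $1\cdots1 = 2^m-1$. Then I would argue: if the server's multiset of additions uses each $x_i$ exactly once, the fingerprint bits never collide, no carry is generated in the fingerprint section, no overflow occurs, and the output is the correct addition of the $c_i$'s together with the all-ones fingerprint. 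This is the ``honest'' branch.

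Next I would handle the deficient branch: if some $x_i$ is added strictly fewer times than the others are (in the complete-fingerprint setting, fewer than once), the resulting fingerprint is missing the bit $2^{j_i}$ unless carries from other additions filled it in; but any such carry into or within the fingerprint section triggers the carry-detection guard, which multiplies the result by $0$. So the output is $0$ or has a visibly wrong (non-all-ones) fingerprint — and since the lemma's disjunction allows ``a complete fingerprint or $0$'', I must be careful to phrase the deficient case as: the result is not a correct addition with complete fingerprint, hence by the guards it is nullified; I should double-check that the blackbox indeed nullifies rather than merely flagging, which it does via the NOT-times-result construction. For the excessive branch: any extra addition of some $x_i$ beyond the honest count either causes a carry in the fingerprint section (nullified by carry detection) or, if it somehow avoids an immediate carry, eventually a repeated addition must overflow the $m$-bit fingerprint section (since with complete fingerprints the fingerprint is already $2^m-1$, a single extra addition of any addend overflows), and overflow is caught by the MSB-based overflow guard, again multiplying by $0$. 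Finally the all-zero-input edge case is dispatched by the blind-if returning $0$.

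I expect the main obstacle to be making the carry/overflow bookkeeping fully rigorous across an \emph{arbitrary} interleaving of additions rather than a single batch sum: the server may add intermediate partial sums in any order, reuse computed values, and attempt to engineer cancellations in the computation part while keeping the fingerprint part clean. The crux is an invariant argument — something like ``at every step of any addition sequence, if no guard has yet fired, the fingerprint part of every value in play is a sum of a subset of the $\{2^{j_i}\}$ with no repeats'' — and then showing that the only way to reach the target fingerprint $2^m-1$ without a guard firing is to have used each $x_i$ exactly once, which pins down the computation part to the honest value. Establishing that invariant, and in particular ruling out that a clever sequence of sub-word overflows in the \emph{computation} section could feed back into the fingerprint section, is where the care is needed; the separation of the fingerprint bits as the least significant part, together with the blackbox zeroing on fingerprint-section overflow, is exactly what prevents this, and I would lean on that design choice as the linchpin of the proof.
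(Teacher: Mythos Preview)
Your approach is essentially the paper's: a short case analysis into honest, deficient, and excessive additions, with the carry-detection and overflow guards doing the work. You are considerably more careful than the paper, which dispatches the lemma in four sentences and never discusses interleavings or invariants; your proposed invariant (``every intermediate fingerprint is a repeat-free subset sum of the $2^{j_i}$'') is exactly the right strengthening and is more than the paper supplies.

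One concrete misstep: in the pure-omission case (an input dropped with no compensation), you assert the blackbox nullifies ``via the NOT-times-result construction''. It does not. That construction fires only on a carry or an overflow inside the fingerprint section, and a simple omission triggers neither; the output is a nonzero value with an incomplete (non-all-ones) fingerprint. The paper handles this case not by nullification but by writing ``Omitting a certain input \ldots\ will result in a not complete fingerprint'' --- detection is delegator-side via the fingerprint check, not blackbox-side. So the lemma's literal disjunction ``correct \ldots\ or the value $0$'' is already slightly imprecise, and the paper's own proof implicitly reads it as also allowing the detectable incomplete-fingerprint outcome. Adjust your deficient branch accordingly and the rest of your plan goes through.
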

\begin{proof}
The proof follows from the definition of the addition no overflow-carry blackbox. Let the FHE values $Input_1$, $Input_2$, \ldots  $Input_i$, following our scheme, having complete fingerprint values, operate in the mentioned blackbox. Execution of the blackbox for any of the same inputs will result in a carry, which will be immediately nullified. Omitting a certain input from the calculation will result in a not \textit{complete fingerprint}, and an attempt to compensate for it will involve the carry operation, which will nullify the result in such an event. Note that even if the fingerprint overflowing vulnerability surveyed earlier was not mitigated by the carry detection mechanism, it will be explicitly nullified at any fingerprint overflow.
\end{proof}

\noindent {\bf Integer fingerprints.}
\label{Integer fingerprints}
After considering binary fingerprints, we turn to consider integer fingerprints, where a random number $k$, bigger than one (as 0/1 values are the identity values for addition/multiplication) is chosen uniformly from the range $1$ to $2^m-1$, such that if the addition operation consists of $i$ addends, then the number of overflow bits can be $\log i$ and thus, $k+\log i$ must be less than or equal to $2^m$ ($m$ is the number of fingerprint bits). This new scheme changes the fingerprint value after each addition by more than a single bit, in contrast to the binary fingerprints proposal. This prevents us from nullifying fingerprint carry operations (as it is now needed), yet adds more possibilities for a potential attacker, and lowers its probability to guess the right fingerprint to $1/2^k$. This still requires us to take into account potential overflows, as the adversary can cause an overflow by adding an input more than once. To overcome this, we require the same overflowing mitigation presented earlier. Obviously, the bootstrap of integer fingerprint requires computation, namely, addition of the first fingerprint (for the needed number of addends), later on, the result of the calculations (when no overflow is possible, as enough leading bit values are $0$) may serve as future fingerprint values, based on the correction of the first computed fingerprints yielding the correction of the following computations in a transitive manner.
\\
\\
\noindent {\bf Different inputs subset probability.}
\label{Different inputs subset}
Integer fingerprints expand substantially the range of possible fingerprint values after each addition, yet they might be maliciously used in a new manner. We can represent the verification process we proposed as a variant of the subset sum problem. The set $Z$ of $i$ positive integers are the fingerprints in each of the input values, and the target value $t$ is the expected fingerprint result $2^{k}$. We have a slight variation on the original problem, as the values (inputs) can be repeated. The arithmetic circuit (polynomial) which was requested to be calculated, represents the intended subset that achieves the target value, yet there might be other subsets that result in the same target. The subset sum problem was recently \cite{bringmann2017near}  proved to be solved in pseudo-polynomial time of $\widetilde{O}$($t+i$). Despite that, our scheme has a major difference, as the values used are encrypted, making our problem much more complex, also in the average case, namely, the \textit{blind subset sum} problem, where the server does not know the actual values of its inputs, nor the target value. This leads us to the conclusion that for a calculation with $i$ addends, and a target value $2^k$, any value of the $k$ least significant bits of $m$ have a uniform probability to be chosen and the sum$\pmod{2^k}\ $has also uniform probability in the range of $0$ to $2^{k}-1$, thus the probability of using a (repeated) subset to gain the needed result is less than $1/2^{k}$.
\\
\\
\noindent {\bf Multiplication by a constant.}
\label{Multiplication by a constant}
Taking into account multiplication by a non encrypted constant, the multiplication can be considered as a shortcut of listing certain inputs several times as addends.
More general multiplication can mix the fingerprint part with the actual computation variable, and therefore we suggest using discrete logarithm representation, as we describe next.
The multiplication complexity occurs due to the influence of the bits in both parts in an unexpected manner in the same or completing parts, and their mutual dependency on each other in the encrypted number. This mixture of bits prevents us from accounting for it in our operations, preventing any option to compensate for it in the fingerprint.
\begin{figure*}[h]
\centering
\includegraphics[width=5cm]{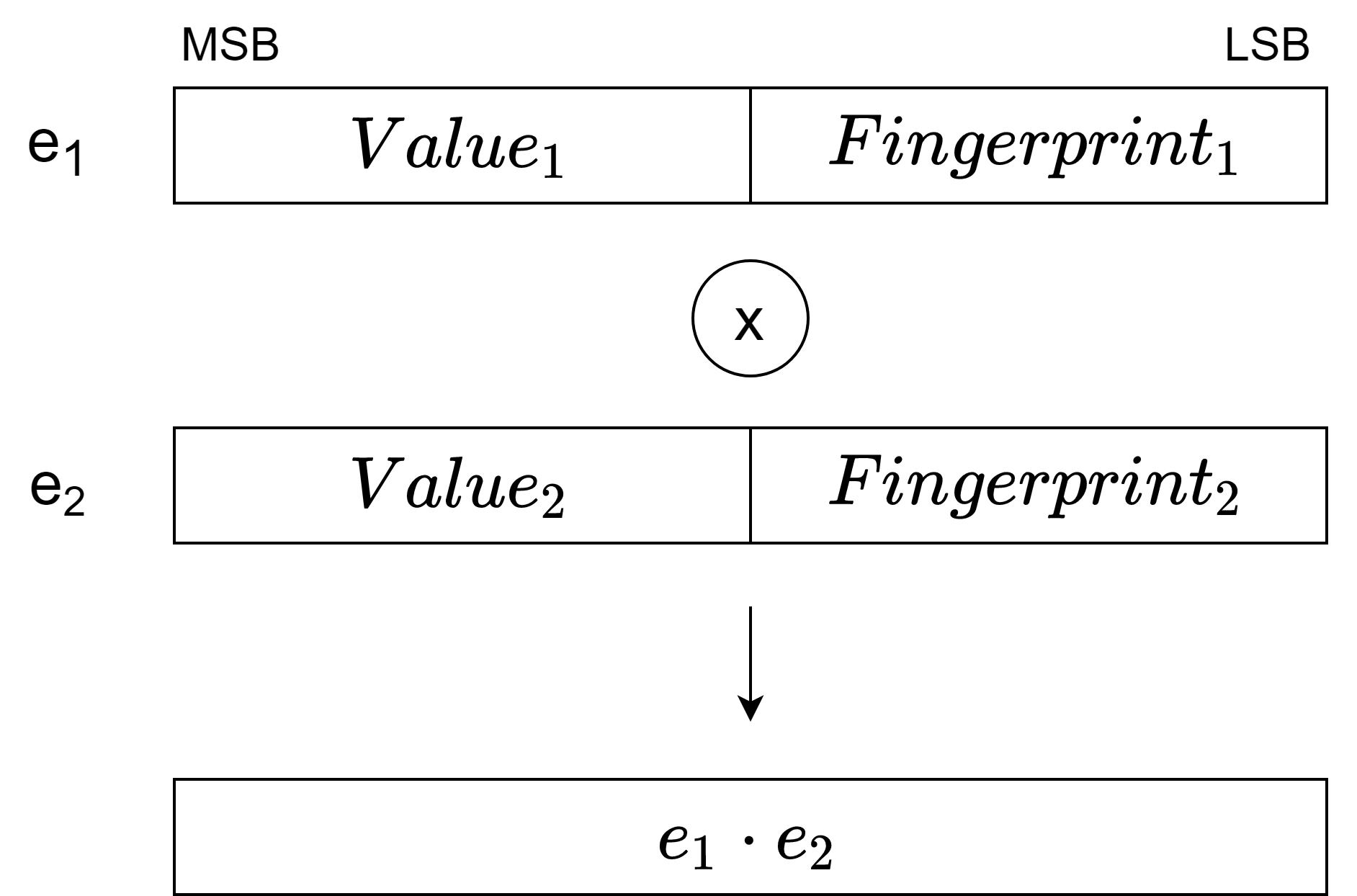}
\caption{Encrypted numbers multiplication}
\label{fig:Encrypted numbers multiplications}
\end{figure*}

The loss of the multiplication operation on encrypted numbers limits and produces a weaker scheme and prevents it from fully implementing arithmetic circuits, which are composites of both additions and multiplications. We evaluate a solution to the deficiency of multiplication in the next sections.
\section{Multiplication via Logarithms in Word Representation}
\label{Multiplication via Logarithms in Word Representation}
\noindent
We suggest using our addition techniques to verify multiplications, namely, use discrete logarithmic representation with fingerprints as inputs.

As our goal is to delegate the complete computation of an arithmetic circuit, we need to verify both additions and multiplications. One possibility is to regard an arithmetic circuit as a polynomial, multiplying to compute monomials, and adding the results of the monomials to complete the polynomial computation.

As a first possibility, the computing server may use LUT to convert the inputs to discrete logarithmic values and to exponentiate the result back before adding.

Then, we propose a possibility to delegate these tasks to a server/hardware/device (e.g., enclaved), or to further enforce that the server compute first multiplications, then use a function that exponentiates the result, while it preserves the encryption and the fingerprints of the inputs, adding an encrypted counter as we detail in the next section.
\\
\\
\noindent {\bf Adversarial capabilities.}
\label{Adversarial capabilites}
We have divided the multiplication process into two phases, the logarithmic addition of values, and the exponentiation of the result. The former was evaluated in the fingerprint manipulation section, in terms of adversarial capabilities, while the latter LUT phase involves the use of an encrypted result for each possible logarithmic addition outcome. The options open to an adversary with an encrypted LUT are limited, where an adversary can only change the order, or amount, of LUT operations on the input(s). This means that any atomic use of the LUT can not be interfered with, yet can be repeatedly abused. Any input other than the intended one, used by the LUT, will result in a wrong fingerprint output being sent to the delegator, or will be wrongly propagated during following operations. An absence of exponentiation will result in a wrong (missing) fingerprint by the server. Any manipulation on the (encrypted) fingerprints values will have the same adversarial probabilities considered in the fingerprint manipulation section.
\\
\\
\noindent {\bf Discrete logarithmic representation.}
\label{Discrete logarithmic representation}
To imitate multiplication between encrypted numbers, we use logarithmic addition in base $2$, which is based on the equation $\log(x \cdot y) = \log(x) + \log(y)$. The solution will be set by the delegator, which is completely transparent to the server. The delegator, instead of requesting multiplication of values, will send the (encrypted) logarithmic representation of the values to the server, which are obviously protected with fingerprints. This makes the server compute additions of log values, which are sent to the delegator as a result to be verified against the fingerprints and then to be exponentiated.

A basic case of computing $4 \cdot 8$, will be represented as $2+3$ in the $\log_2$ field by the delegator. Having the server compute this addition in the proposed scheme will result in the value 5, where the delegator will calculate the result of $2^5$ by itself, and achieve the value $32$, which is the original result of the requested multiplication $4 \cdot 8$.

By using logarithmic addition we can mimic multiplication in our scheme, yet this will introduce other, smaller, limitations.
First, this requires the delegator to calculate the logarithmic representation of the values, possibly by using LUT or caching the already computed logarithms for later reuse.

For example, in the function we show later as $F(x,y) = (2\cdot x)+y+3$ (all constants are encrypted), the calculation of the addition of constants will influence differently than intended, as those values should be reduced to represent their correct significance in the new field. This restricts the complexity of addition and multiplications calculations of individual factors, and form a calculation characterized by a long addition of many monomials.
Moreover, not all log values will result in a round value, restricting to the use of a floating-point FHE scheme, while introducing rounding errors, where it is possible to operate on integers only in additions.
\\
\\
\noindent
{\bf Logarithmic lookup table.}
A possible way to overcome the dependency on the delegator is to exponentiate logarithmic represented values by a blackbox procedure for exponentiation, while keeping the original fingerprint and adding fingerprints for counting the number of exponentiated results that were added.

This means that along with the ordinary fingerprints for additions ($FP_{a}$) which were explained earlier, there will be other (multiplication) fingerprint bits ($FP_{m}$) assigned to track and verify the number of LUT uses made by the server, in the same manner binary fingerprints were used to verify the execution of the addition operator on the input(s). The new fingerprint serves as an indication to avoid the possibility that the server adds the logarithmic representation with their fingerprints, rather than enforcing the exponentiation with fingerprints to ensure the avoidance of this scenario as well.

\begin{figure*}[h]
\centering
\includegraphics[width=3.5cm]{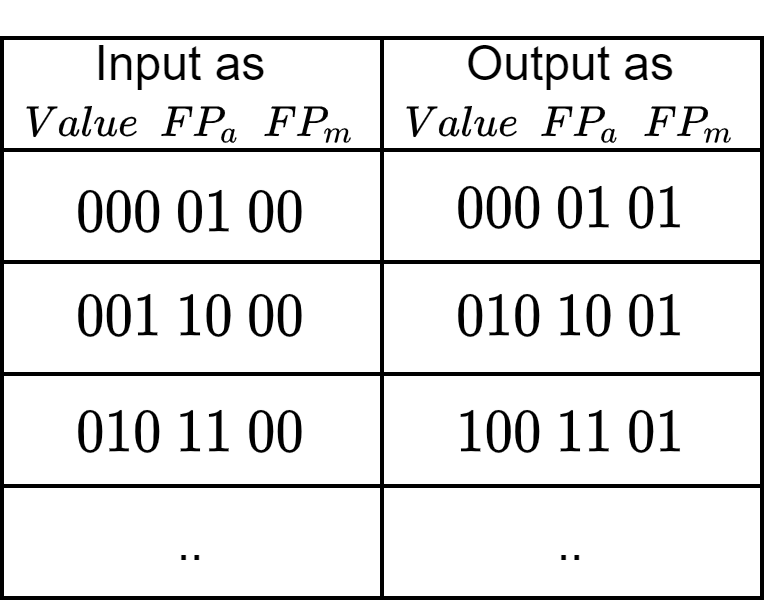}
\caption{Exponentiation lookup table example}
\label{fig:LUT}
\end{figure*}

In continuation from the previous example of $4 \cdot 8$, the server will calculate $2+3$, resulting in $5$, which is represented as $000101$. The fingerprint is of equal size of six bits, let's say, $011000$. For readability, we separate the computation and fingerprint parts with a space. The delegator will set a row in the LUT, for the input of $000101\ 011000$, the result of $100000\ 011001$. This result is composed of the calculation result $100000\ (32)$  and the fingerprint $011001$. The fingerprint was changed from the original $011000$, where the LSB was turned on. This bit is used to indicate in the final result received from the server that this specific exponentiation operation was indeed applied in the calculation process, as the fingerprint was changed accordingly. 

This solution can make the server independent from the delegator and will allow it to continue calculating without interruptions, yet with a computational cost of using the LUT when switching from multiplication to additions.

For example, we calculate the polynomial (that represents an arithmetic circuit), say, $F(x,y) = 2\cdot(x\cdot y+32)$ for $x=4, y=8$. Following the previous example, we expand the fingerprint and calculation parts to eight bits. The requested $4 \cdot 8$, was changed by $log_2$ to $2+3$, resulting in $00000101\ 00011000$.

Using the exponentiation LUT, the value was changed to $00100000\ 00011001$, containing the requested value $32$ ($00100000$), with the fingerprint value $00011001$. The next operation is an addition of 32, having the fingerprint value $00100000$. This sum is represented as $01000000\ 00111001$.

Next, we request to multiply by 2, which is represented as the addition of the value 1. Adding the value $1$ will obligate transforming $64$ to $6$ by the same $\log_2$, while retaining the fingerprint value. This will be achieved by using another LUT, changing the value $01000000\ 00111001$ to $00000110\ 00111010$, where the counting fingerprint value was increased by 1. Finally, we will add the value $1$ with it's fingerprint $01000000$ to result $00000111\ 01111010$.
By using the exponentiation LUT again, we will change the computation part, which equals $7$, to $128$, and increase the fingerprint by 1. This yields the final output of $10000000\ 01111011$. The fingerprint value is known and consistent for the selected fingerprint values, and will be compared and verified when acquired by the delegator. The whole calculation process is shown in Figure \ref{fig:Calculation LUT example}.

\begin{figure*}[h]
\centering
\includegraphics[ height=5cm]{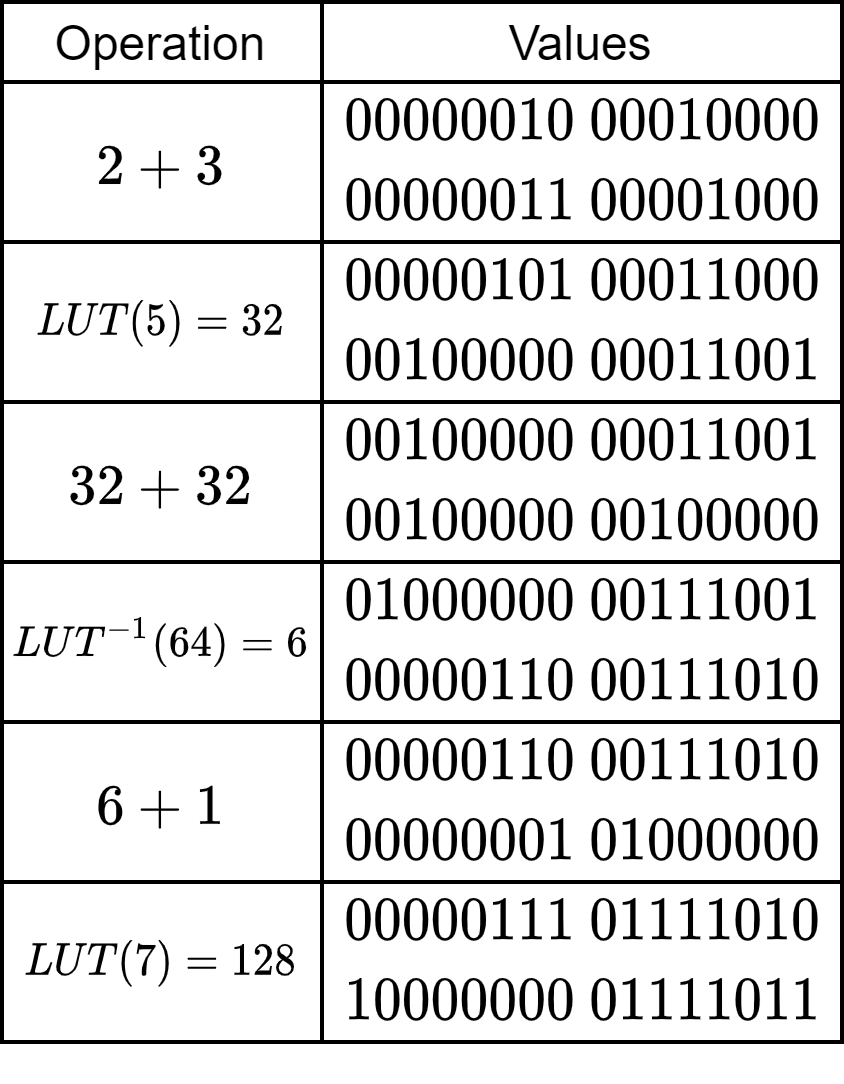}
\caption{Calculation using LUT example}
\label{fig:Calculation LUT example}
\end{figure*}

\begin{lemma}
\label{lemma:multiplication}
For every group of FHE inputs, each having $m$ least significant bits with addition and multiplication (counting) fingerprints ($FP_a,\ FP_m$), and the other $n$ bits used for computation, operated with the addition no overflow-carry blackbox and the exponentiation LUT defined earlier, will result in the correct calculation of the computation and fingerprint parts, or the value 0.
\end{lemma}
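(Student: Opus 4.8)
The plan is to reduce this lemma to Lemma~\ref{lemma:addition} together with a separate analysis of the exponentiation LUT as an atomic, tamper-proof primitive, and then to splice the two invariants together. First I would fix an input family $Input_1,\dots,Input_i$ formatted according to the scheme, so that each carries a complete addition fingerprint in its $FP_a$ bits and a prescribed (e.g.\ zeroed) counting fingerprint in its $FP_m$ bits, all inside the $m$ least significant bits, with the remaining $n$ bits holding the logarithmic computation value. The primitives available to the server are exactly the addition no overflow-carry blackbox of the previous section plus the exponentiation LUT; by the ``Adversarial capabilities'' discussion, the only freedom the server has with the LUT is to change the order or the number of LUT invocations, never to open or alter an individual encrypted table row.

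Second I would establish the LUT invariant: every row of the exponentiation table maps a pair $(\text{log-value},\,FP)$ that can legitimately arise in the intended computation to the pair $(\text{exponentiated value},\,FP')$ where $FP'$ differs from $FP$ only by turning on the next $FP_m$ bit, and, exactly as in Algorithm~\ref{alg:LUT}, the blind look-up returns $0$ whenever the supplied ciphertext matches no row. Hence any use of the LUT on a value other than one the intended circuit would feed it --- because an earlier addition was deficient or excessive, or because a spurious LUT call was inserted --- either yields an immediate $0$, or yields a value whose $FP_m$ bits no longer match the count the delegator expects, or overflows the fingerprint section and is nullified by the overflow/carry mechanism of Lemma~\ref{lemma:addition}. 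The counting fingerprint thus plays for exponentiations the role the complete binary fingerprint plays for additions: knowing the arithmetic circuit, the delegator knows the exact number of exponentiations, hence the exact expected $FP_m$; a missing exponentiation leaves an $FP_m$ bit off, a redundant one sets an extra $FP_m$ bit or overflows.

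Third I would combine the two invariants into a single statement about reachable states, proved by induction on the length of the server's operation sequence: after every step the ciphertext is either $0$, or encodes the value the honest prefix of the computation would have produced, with $FP_a$ equal to the honest partial addition fingerprint and $FP_m$ equal to the honest count of exponentiations so far. The base case is the formatted inputs; the inductive step is a case split on whether the next operation is a blackbox addition (covered by Lemma~\ref{lemma:addition}: correct result with correct $FP_a$, or $0$ on any deviating sum, carry, or overflow) or a LUT call (covered by the LUT invariant). An intermediate $0$ is absorbed by a \textit{blind if} exactly as in the addition lemma, so $0$ propagates to $0$. I would close with the transitive bootstrapping remark: correctness of the first explicitly computed $FP_a$ and $FP_m$ values justifies reusing later verified results as fingerprints for subsequent delegations.

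The step I expect to be the main obstacle is the completeness half of this characterization --- ruling out a clever interleaving of additions and LUT calls that lands on a state with the honest $(FP_a,FP_m)$ but a corrupted $n$-bit computation part. This is where the argument must lean on two facts at once: that the LUT is blind and total (non-matching input $\mapsto 0$), so the adversary cannot ``shift'' a corrupted log-value back onto the table's domain; and that the complete-fingerprint discipline forces any compensating repetition of an addend to disturb a carry in the $FP_a$ section (Lemma~\ref{lemma:addition}) or an extra bit in the $FP_m$ section. One should also verify that the $FP_a$ overflow/carry nullification and the $FP_m$ bookkeeping do not interfere --- i.e.\ that the blackbox is built for the fixed, known split of the $m$ fingerprint bits into its $FP_a$ and $FP_m$ sub-blocks --- and that the rounding issues noted in the logarithmic-representation paragraph stay confined to the computation part and never silently flip a fingerprint bit.
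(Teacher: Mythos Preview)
Your proposal is correct and follows essentially the same decomposition as the paper: invoke Lemma~\ref{lemma:addition} for the logarithmic additions, and argue separately that the counting fingerprint $FP_m$ detects any omitted, repeated, or out-of-place LUT exponentiation. The paper's own proof is considerably terser than your plan --- it does not set up an explicit induction on the server's operation sequence, but simply records the two observations you identify (Lemma~\ref{lemma:addition} covers the additive phase; skipping or abusing the LUT leaves the $FP_m$ counter wrong). The one technical point the paper makes concrete, which you only flag as something ``one should also verify,'' is the non-interference of the two fingerprint sub-blocks: the paper fixes $FP_m$ to sit in the \emph{least significant} bits of the fingerprint section, so that any carry attack through the $FP_a$ bits can only spill upward into the computation part (where overflow is already nullified), never downward into $FP_m$. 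Your inductive framing is a cleaner way to package the same ideas and would make the ``reachable states'' invariant explicit, whereas the paper's argument is closer to a checklist of attack vectors; neither buys additional generality, but your version would be easier to extend if further blackboxes were added.
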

\begin{proof}
By Lemma \ref{lemma:addition} the logarithmic addition is correct or set to zero prior to the exponentiations using the LUT. In case the server adds the results of the logarithmic addition and outputs the result without using the LUT, detection of the missing counter in $FP_m$ occurs. As the multiplication fingerprints $FP_m$ reside in the least significant bit part of the fingerprint section itself, even the bounded carry attack attempt using the addition fingerprints will not assist the adversary, as the carry bits could only overflow to the computation part. Any overflow from the computation part effecting the fingerprint section, will also nullify the output. The LUT operations, defined by the delegator, will include all possible monomial and fingerprint values, and will be set to change the $FP_m$ values in a distinct manner.
\end{proof}
\section{Integration and Implementation in Word Granularity}
\label{Integration and Implementation in Word Granularity}
As we deal with multiplications (in logarithmic representation) exponentiation while preserving computation fingerprints, and operating additions, we may want to use blackbox operations that enforce the ordered operations we want to apply. One may use different isolated server/device/hardware restricted to execute only one portion in the above sequence of operation and send the results to the next entity to operate the next operation as required. One opportunity is to use the separated enclave as such computation hardware.

In the previous sections we discussed limitation, that could be enforced by executing different blackboxs as enclaves. The ``Intel SGX'' \cite{costan2016intel} is capable of creating the mentioned enclave, running encrypted code made for a specific program, where the execution of this code is inaccessible for other external processes or entities to interfere. Thus, it will enable us to enforce the use of additions exclusively, eliminating the possibility of any unexpected behavior. 
\\
\\
\noindent {\bf Using existing software libraries.}
\label{Using existing software libraries}
For our example we used Microsoft SEAL library (C\texttt{\#}), where we calculated the following function $F(x,y) = (2\cdot x)+y+3$, where all the constants are in their encrypted form.
Allocating $6$ bits as the size of each part, we have selected (3,2) as the predefined fingerprint values, where $F(3,2)=11$. The fingerprint part was selected as the right (LSB) side, as this choice does not cause overflows. The computation values, generally selected by the computation delegator for it's own use, in this example equal to $(4,7)$, where $F(4,7)=18$.
The addition of constant values was implemented as basic additions along with the requirement of receiving the added value in a doubled form, meaning that for the addition of the constant $3$, the value is received with its shifted counterpart, so the addition will be of a number constructed as $3$, added to the bitwise logically left shifted (indicated as \textit{LSH}) by 6 bits counterpart of it, $LSH_6({3})=192$, where the same was also set for subtraction. The constant multiplication was made as regular multiplication of the constant encrypted value. The input values were $LSH_6$ each, where each result was appended to the requested fingerprint values, in this example, $(3,2)$.

The constructed input was set as $x=LSH_6({4})+3 = 259$, $y=LSH_6({7})+2 = 450$. After calculating $F$ in the form described above, the cleartext result was decrypted as $1163$, of which the least significant part, containing the fingerprint value is extracted using bitwise left and right logical shifting (\textit{LSH}, \textit{RSH}), by calculating $RSH_6(LSH_6({1163}))=11$. Extracting the computation value from the most significant part was by $RSH_6({1163})=18$. Verifying that the former equals our expected fingerprint value of $11$ confirm the calculating, stating that the calculation was verified to have been calculated correctly and that $F(4,7) = 18$.
Out of 3 required additions/subtractions in the original function $F$, the same 3 operations were made, keeping also the original single multiplication as is.
An important note is that the extra addition operations made to the constant values should be computed just once per function definition, and could be reused later and on different input values or servers. This means that the calculation of ($LSH_6({3})+3$) is independent of the (other) input values, and could be calculated offline and used for all further calculations.
The source code of this example can be found in \cite{github}.
\\
\\
\noindent {\bf Supported libraries.}
\label{Supported libraries}
The main libraries tested were Microsoft SEAL \cite{sealcrypto} and IBM HELib \cite{halevi2014algorithms}, both use the BGV/BFV \cite{brakerski2014leveled} and CKKS \cite{cheon2017homomorphic} schemes. Both of those libraries support C\texttt{++} and work on Linux (Ubuntu), where SEAL also works on Windows OS with C\texttt{\#} support. Note, that the CKKS scheme supports floating-point operations. 
The libraries were chosen for their popularity and accessibility, where each of them supports FHE with addition and multiplication. Along with that, HELib focuses on making SIMD operations much faster than previous libraries \cite{halevi2014algorithms}, allowing us to work with vectors of encrypted data, and apply a selected operation on all of the vector elements.
\\
\\
\noindent {\bf Encrypted number representation.}
\label{Encrypted number representation}
As mentioned earlier we consider working on encrypted vector (bit) representation of the data, yet not all schemes have (encrypted) bit access, and on some of the FHE schemes we have an encrypted integer number representation. 
Working with regular FH encrypted numbers involves drastically more computations compared to the vector representation of the data, yet we are still able to operate on it, including the basic blind conditioning defined earlier.

On vector-based variables allowing specific bit access with $w$ bits, the computation complexity of blind conditioning can be represented as a calculation for each of the two possible values of a bit. This will be made for all the bits separately, thus, resulting in a complexity of $O(w)$. On the other hand, with regular FHE numbers (without bit access) the blind conditioning will depend on all of the bit values of the input, combined. This means that we will have a condition for every possible value of the number as a whole, representing a $O(2^w)$ complexity, where $w$ is the total number of bits in the input. Due to that, we refer to both of the representations as equivalent in a qualitative (rather than quantitative) manner, as they have the same output capabilities.
\\
\\
\noindent {\bf Fields.}
\label{Fields}
While limiting the number of operations possible to eliminate the chance of overflowing each of the part limits, we can elaborate on a different way to \textit{reinitialize} our verification layer, using modulo and finite fields. This can be implemented in a simple form by providing the modulo value as an encrypted number from the delegator, used by the server, to subtract any whole multiplication of the field. An overflow can be detected using a {\em blind if} (algorithm \ref{alg:Blind}) on a bit indicating it, and the subtracted result will be used for further calculation. The provided encrypted value from the delegator could also possibly include a module value to be subtracted from the fingerprint part, yet the two parts are not dependent, and the fingerprint section can be set to not be affected by the subtraction.

\begin{lemma}
\label{lemma:arithmetic circuit}
Any calculation of an arithmetic circuit, representing a polynomial, having FHE inputs that are infused with computation fingerprints following our scheme, can be successfully verified by the delegator.
\end{lemma}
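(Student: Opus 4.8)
The plan is to reduce the statement to the two structural lemmas already established, Lemma~\ref{lemma:addition} and Lemma~\ref{lemma:multiplication}, by induction on the structure of the arithmetic circuit. First I would fix the representation: write the polynomial computed by the circuit as a (possibly long) sum of monomials, so that every gate of the evaluated circuit is either a blackbox word-addition (no overflow-carry) gate or a multiplication gate realized, as in Section~\ref{Multiplication via Logarithms in Word Representation}, by moving the operands into discrete $\log_2$ representation, performing a logarithmic addition, and applying the exponentiation LUT to return to value representation. Each FHE input is assumed infused with the $m$ least significant fingerprint bits ($FP_a$ and, when multiplication is present, $FP_m$) as prescribed, and the witness/base values are evaluated once by the delegator so that the expected fingerprint of the whole circuit is known in advance.

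The induction runs as follows. Base case: the circuit inputs carry, by construction, the correct complete fingerprints in their least-significant $m$ bits, so the invariant ``the (sub-)result is either the correct value together with the correct fingerprint, or the value $0$'' holds. Inductive step splits on the top gate. For an addition gate applied to sub-results already satisfying the invariant, Lemma~\ref{lemma:addition} gives that the blackbox output again satisfies it (a $0$ operand simply forces a detectably wrong, i.e.\ zero, output, handled uniformly by the \textit{blind if} of Algorithm~\ref{alg:Blind}). For a multiplication gate, the operands are first reduced to their $\log_2$ fingerprinted form (with constants reduced to their correct significance in the new field, as noted in Section~\ref{Multiplication via Logarithms in Word Representation}), and Lemma~\ref{lemma:multiplication} gives that the logarithmic addition followed by the exponentiation LUT yields the correct product with correctly updated $FP_a, FP_m$, or the value $0$; the $FP_m$ counter in particular rules out the server skipping the exponentiation step. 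Any overflow of a part limit during this process is prevented, or detected and reinitialized, by the modulo/finite-field mechanism of Section~\ref{Fields}, again using Algorithm~\ref{alg:Blind} on the overflow-indicator bit, so the invariant is preserved through the entire evaluation.

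Finally I would close the argument on the delegator's side: after the server returns the encrypted output, the delegator extracts the least-significant $m$-bit fingerprint block and compares it with the value it precomputed on the witness inputs; by the invariant just established, an honest evaluation returns exactly this fingerprint, while any deviation either produces a mismatching fingerprint or the all-zero output (itself a mismatch), so the computation is accepted if and only if the fingerprint check passes. The main obstacle I expect is not any single gate but the bookkeeping at the domain boundaries --- guaranteeing that passing from value to $\log_2$ representation and back, possibly many times along a monomial, never lets a carry or overflow leak from the fingerprint part into the computation part (or vice versa) without triggering nullification, and that the two fingerprint sub-blocks $FP_a$ and $FP_m$ remain correctly aligned and independent of the computation block throughout. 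This is precisely where the carry-detection, complete-fingerprint, and field-reinitialization restrictions must be invoked together, and where the probabilistic soundness bounds from the fingerprint-manipulation and \textit{blind subset sum} discussions quantify what a malicious server can still attempt.
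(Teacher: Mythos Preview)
Your proposal is correct and follows essentially the same route as the paper: reduce to Lemmas~\ref{lemma:addition} and~\ref{lemma:multiplication}, decompose the polynomial as a sum of monomials so that every step is either a blackbox addition or a log-representation multiplication covered by those lemmas, and conclude that the delegator accepts if and only if the returned fingerprint matches the precomputed one. The paper's own proof is far terser---it simply invokes the two lemmas and the monomial decomposition without the explicit structural induction, the $0$-propagation bookkeeping, or the domain-boundary discussion you add---so your version is a more fleshed-out variant of the same argument rather than a different one.
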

\begin{proof}
By Lemma \ref{lemma:addition} and Lemma \ref{lemma:multiplication}, the addition and multiplication of encrypted values, infused with fingerprints, will result in the correct calculation of the computation and fingerprint parts, or the value 0. Any polynomial is the addition of monomials, where each monomial is the multiplication of values. Thus, we can represent and calculate any polynomial by our scheme, therefore, we can calculate any arithmetic circuit. Receiving the expected fingerprint value will verify the correct execution of the arithmetic circuit, and any different fingerprint result will deem otherwise.
\end{proof}
\section{Restrictive SIMD as a vector with Fingerprints}
\label{Restrictive SIMD as a Vector with Fingerprints}
We move on to the possibility of incorporating the \textit{computational fingerprint} into SIMD manipulated data structures, such as vectors. Starting with vectors of integer elements, we proceed to vectors with only fingerprint integer values, enabling floating point arithmetic for the delegator.
\\
\\
\noindent {\bf Background on floating point operation on data in SIMD.}
\label{Background on floating point}
The primary FHE floating point scheme is the CKKS \cite{cheon2017homomorphic}, which works by encoding a vector of plaintext values to a polynomial. Each coefficient of the polynomial is originated from the corresponding vector element, which is projected to a different field, and then multiplied by a scaling factor to control the rounding error of the floating point number. This polynomial is defined by treating the vector values as the image for a selected set of roots, where the degree of the resulted polynomial will equal the vector length. In addition to that, there is a special technique called rescaling, that consistently maintains the decryption structure small enough compared to the ciphertext modulus, as a kind of modulus switch operation \cite{brakerski2014leveled}.

The rescaling operation reduces the size of the ciphertext modulus, while maintaining a valid encryption and almost preserving the precision of the plaintext, using primes selected as part of the encryption parameters, so at each rescaling execution the ciphertext modulo is divided by one of the chosen primes. It is important to note that as in modulus switching, the number of primes limits the amount of possible rescaling operations, and thus limits the possible depth (level) of the computation. This limit will enable us to restrict overflow attempts caused by redundant SIMD operations.

FHE operations are obtained by manipulating each monomial of the polynomial accordingly. This succinct polynomial representation is well suited for SIMD calculations, where each operation will be acted upon in the same manner, on all polynomial coefficients. Still, we can regard each entry of the SIMD vector as a totally independent value.
\\
\\
\noindent {\bf Adversarial capabilities in SIMD environment.}
\label{Adversarial capabilities SIMD}
Although instructed otherwise, the adversary can compute a different arithmetic circuit (or general computation, which is not necessarily defined in a finite field) than the one the delegator instructs the server to use. Still, the server is bounded to perform the exact same sequence of operations on each entry of the data vector, and cannot execute different programs for each such entry. Moreover, there is isolation between vector entries, with no mutual influence.

We choose to restrict our vector components, both fingerprint and computation elements, to integers, as using floating point may result in rounding errors, possibly nullifying the effect of (very) small fingerprint numbers not in the scale of (much) bigger numbers when, say, they are added to each other, and this will result in approximating the small number to 0 (see e.g., \cite{goldberg1991every}).
\\
\\
\noindent {\bf Integers in SIMD.}
\label{Integers in SIMD}
Using the floating point CKKS scheme described earlier, we are able to manipulate vectors containing different elements in an efficient SIMD manner. We will adapt the \textit{integer fingerprints} scheme to comply with this new state, where separate elements reside in a vector. As in the basic \textit{integer fingerprints} scheme, the fingerprint element, manipulated by the requested arithmetic circuit from the delegator, will result in a value the delegator can better verify.

By limiting our scheme to work with only integer elements, we also restrict divisions, as this impairs any possibility of creating floating point values while executing a calculation.
Regarding possible overflows, discussed in the previous sections, any overflow of the vector will affect all of its elements in the same manner, without mutual influence between different elements. This means that overflows might just create new (overflowed) values or return the vector to its original state, meaning that in this scheme, we do not need to require a blackbox that identify and act upon overflows.

Note that the usage of integers in SIMD allows us to skip the log representation conversion (using the LUT) described earlier for the word granularity case, making our scheme even more efficient. In addition to that, for the same arithmetic circuit, we can verify multiply isolated inputs in a vector, using a single fingerprint value in it.
\\
\\
\noindent {\bf Fingerprints in floating point vectors.}
\label{Fingerprints in floating}
SIMD supports floating point variables and operations, still, we prefer the fingerprints to be restricted to integer and to avoid divisions and subtractions altogether. Additionally, we avoid adding any constant floating point values to the input, as those also affect the fingerprint element. The reasons for restricting the fingerprints to be integer are related to the special rounding operations applied in the scope of float representations. Nevertheless, we allow floating point values to be used as the computation elements.

Avoiding divisions and subtractions includes limiting all fingerprint values to be a positive integer (as discussed above, greater than 1), to prohibit subtraction by an addition of negative values, or division as the multiplication of fractures.

As for avoiding divisions and subtractions, we can take, for example, a floating point computation element with a value of $0.0001$, added to a $10000000000000.0$ and then immediately subtracted by the same $10000000000000.0$. The addition and subtraction of the floating point element may be approximated to $0$, yet the corresponding integer fingerprint elements that maybe, say 7 and 4, will return to the original value of 7, which makes the floating point rounding invisible to the verifying delegator, thus, we might verify a wrong computation result as the right result.

Even if we use only integers as fingerprints, changing the ordering of operation, might be used to maliciously create the correct fingerprint result. The possibility of this manipulation will be thoroughly discussed in the following section.

An implementation example of calculating the polynomial $F(x,y) = (((2 \cdot x)+1.5) \cdot (y \cdot 3)) + 0.1$, with SIMD manipulation of a vector containing integer fingerprint element and floating point computational values, using Microsoft SEAL, can be found in \cite{github}.
\\
\\
\noindent
{\bf Computation fingerprint circuit.}
We would like to mention that one may also want to trace the order of operations made (e.g., first adding small numbers to be non negligible before adding them to a very large number, a serial addition of each one of them to the big number results in an approximation to $0$). We suggest extending the tracing by having the blackbox produce (possibly during the computation or as a final result) a symbolic (with or without FHE known computed values) representation of the computation steps, in a form of an arithmetic circuit.

\begin{theorem}
Under the surveyed restrictions, computation fingerprint circuit verifies the correctness of the computation made in SIMD.
\end{theorem}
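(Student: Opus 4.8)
The plan is to verify correctness along two orthogonal axes and to argue that together they are exhaustive: a \emph{value} axis, carried by the fingerprint coordinate and the lemmas already established, and a \emph{structure} axis, carried by the symbolic circuit that the trusted blackbox emits. The glue between them is the SIMD isolation property noted in the ``Adversarial capabilities in SIMD environment'' paragraph: each slot of the CKKS-packed vector evolves independently and the server is forced to apply the identical operation sequence to every slot, so whatever the server did to the computation slots it did verbatim to the fingerprint slot.

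For the value axis I would invoke Lemma~\ref{lemma:arithmetic circuit}, and through it Lemmas~\ref{lemma:addition} and~\ref{lemma:multiplication}, specialized to the fingerprint slot. The surveyed restrictions place us inside the hypotheses of those lemmas: fingerprint entries are integers chosen greater than $1$, subtraction and division are disallowed, and no stray floating-point constant touches the fingerprint slot; any overflow in this setting affects all slots uniformly and hence perturbs the fingerprint, so it is detected. Consequently, if the fingerprint slot returned by the server equals the value the delegator precomputed by running the \emph{requested} circuit on the chosen fingerprint inputs, then the executed computation either realizes the same polynomial on that slot or the output is identically $0$; in particular, no deviation that changes the computed value while leaving the fingerprint fixed is available --- exactly as in the word-granularity case.

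The structure axis is the new ingredient. The fingerprint value is insensitive to the \emph{order} in which floating-point computation slots are combined, so, as the previous subsection shows with the $0.0001$-versus-$10^{13}$ example, a reordering can leave the fingerprint correct while corrupting the floating-point result. This gap is closed precisely by the computation fingerprint circuit: the blackbox outputs a faithful symbolic arithmetic circuit whose nodes are labelled by the primitive invoked and whose edges record operand provenance, and the delegator compares it syntactically against the requested circuit. A short case analysis over the admissible adversarial moves --- computing a different arithmetic circuit, inserting a redundant addition or LUT call, reordering additions across scales, or substituting one LUT entry for another --- shows that each move either alters the emitted circuit, and is caught by the syntactic comparison, or, with the circuit held fixed, alters the fingerprint value or zeroes the output, and is caught by the value axis.

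The main obstacle is establishing joint completeness of the pair $(\text{fingerprint value},\ \text{symbolic circuit})$: one must rule out an execution that simultaneously reproduces the expected fingerprint \emph{and} a circuit syntactically equal to the requested one, yet yields a wrong computation result. Under the surveyed restrictions this collapses to the observation that a circuit syntactically equal to the requested one performs, slot by slot and in a fixed order, exactly the requested sequence of operations, so the only residual freedom --- operand reordering --- has been removed by the trace, and the float-rounding attack, which strictly required a re-ordered circuit, is unavailable. I would close by noting that the pre-processing and transitivity property is preserved: a SIMD output verified this way may itself serve as the fingerprint witness for a subsequent computation of the same circuit, so the $\log_2$-scale overhead accounting of the earlier sections carries over unchanged.
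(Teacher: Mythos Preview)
The paper does not actually supply a proof of this theorem; it is stated bare, immediately after the ``Computation fingerprint circuit'' paragraph, with no accompanying argument. So there is nothing to compare your proposal against on the paper's side.

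On its own merits, your two-axis decomposition (fingerprint value plus emitted symbolic circuit) is a sensible reading of what the section is gesturing at, and your identification of the SIMD slot-isolation property as the glue is exactly right. One caution: invoking Lemmas~\ref{lemma:addition}, \ref{lemma:multiplication}, and \ref{lemma:arithmetic circuit} directly is a mismatch of hypotheses. Those lemmas are stated for the word-granularity setting with \emph{complete} (unary) fingerprints, the addition no-overflow-carry blackbox, and the exponentiation LUT; the SIMD section explicitly departs from all three --- it uses \emph{integer} fingerprints in a dedicated vector slot, it notes that overflow handling is not needed because ``overflows might just create new (overflowed) values or return the vector to its original state,'' and it drops the LUT entirely (``the usage of integers in SIMD allows us to skip the log representation conversion''). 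So the value-axis step needs its own argument in the SIMD model rather than a citation of the earlier lemmas, and the paper never supplies one. Your structure-axis argument, by contrast, tracks the paper's informal suggestion closely and is the part that does real work against the reordering attack the paper flags.
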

\section{Conclusion Remarks}
\label{Conclusion Remarks}
We presented a new approach to use computation fingerprints within FHE values to check a delegated computation result. Our word granularity and SIMD solutions are designed to check the final result of the computation, without any significant additional overhead.

Thus, having FHE inputs and a desired arithmetic circuit, any entity following our scheme can delegate the required computation to any other party, have a with high probability guaranteed that the computation was made as requested, and accomplish this procedure by computing the calculation exactly once, with no redundancy.

The SIMD \cite{smart2014fully} optimization that many FHE libraries provide allows performing an individual computation, like addition or multiplication, on a vector of encrypted elements. This unique optimization will give us a substantial advantage over the word granularity FHE schemes, as the scope of the fingerprint method may be extended to the case of non-integer values, including floating point calculations (the fingerprint can still be restricted to integers, but the actual variables can be also beyond integers). Note that the restriction of using only SIMD server may yield a less efficient solution than the computer word solution, as it requires to compute the function more than once, either in parallel or sequential manner. Still, when a computation should be executed in parallel in the first place, or when the function is defined over non integer values (e.g., floating point) and is beyond the computation capabilities of an arithmetic circuit over a finite field, the SIMD setting can serve us better.

Verifiable computing may be used on almost any of the many FHE applications, as Homomorphic encryption in its essence is oriented toward delegating data to be operated in untrusted environments \cite{armknecht2015guide}, thus, there is great benefit in verifying and providing assurance for those type of computations.

\end{document}